\definecolor{britishracinggreen}{rgb}{0.0, 0.26, 0.15}
\definecolor{cadmiumgreen}{rgb}{0.0, 0.42, 0.24}
\newcommand{\plan}{\pi}
\newcommand{\action}{a}
\newcommand{\brackets}[3]{\left#1#3\right#2}
\newcommand{\tuple}[1]{\brackets{<}{>}{#1}}
\newcommand{\utilfunc}{u}
\newcommand{\state}{s}
\newcommand{\applied}[1]{\leftq #1 \rightq}
\newcommand{\leftq}{\llbracket}
\newcommand{\rightq}{\rrbracket}
\newcommand{\ptask}{\Pi}
\newcommand{\initstate}{\state_{0}}
\newcommand{\init}{{\initstate}}
\newcommand{\range}[1]{[#1]}
\newcommand{\costfunc}{c}
\newcommand{\var}{v}
\newcommand{\eff}{{\mathsf{eff}}}
\newcommand{\pre}{{\mathsf{pre}}}
\newcommand{\valuefunc}{u}
\tikzset{
	start-end/.style={
		draw,
		rectangle,
		rounded corners,
	},
	input/.style={ % requires library shapes.geometric
		draw,
		trapezium,
		trapezium left angle=60,
		trapezium right angle=120,
	},
	operation/.style={
		draw,
		rectangle
	},
	loop/.style={ % requires library shapes.misc
		draw,
		chamfered rectangle,
		chamfered rectangle xsep=2cm
	},
	decision/.style={ % requires library shapes.geometric
		draw,
		diamond,
		aspect=#1
	},
	decision/.default=1,
	print/.style={ % requires library shapes.symbols
		draw,
		tape,
		tape bend top=none
	},
	process rectangle outer width/.initial=0.15cm,
	predefined process/.style={
		rectangle,
		draw,
		append after command={
			\pgfextra{
				\draw
				($(\tikzlastnode.north west)-(0,0.5\pgflinewidth)$)--
				($(\tikzlastnode.north west)-(\pgfkeysvalueof{/tikz/process rectangle outer width},0.5\pgflinewidth)$)--
				($(\tikzlastnode.south west)+(-\pgfkeysvalueof{/tikz/process rectangle outer width},+0.5\pgflinewidth)$)--
				($(\tikzlastnode.south west)+(0,0.5\pgflinewidth)$);
				\draw
				($(\tikzlastnode.north east)-(0,0.5\pgflinewidth)$)--
				($(\tikzlastnode.north east)+(\pgfkeysvalueof{/tikz/process rectangle outer width},-0.5\pgflinewidth)$)--
				($(\tikzlastnode.south east)+(\pgfkeysvalueof{/tikz/process rectangle outer width},0.5\pgflinewidth)$)--
				($(\tikzlastnode.south east)+(0,0.5\pgflinewidth)$);
			}  
		},
		text width=#1,
		align=center
	},
	predefined process/.default=1.75cm,
	man op/.style={ % requires library shapes.geometric
		draw,
		trapezium,
		shape border rotate=180,
		text width=2cm,
		align=center,
	},
	extract/.style={
		draw,
		isosceles triangle,
		isosceles triangle apex angle=60,
		shape border rotate=90
	},
	merge/.style={
		draw,
		isosceles triangle,
		isosceles triangle apex angle=60,
		shape border rotate=-90
	},
	%     connection/.style={
	%         draw,
	%         circle,
	%         radius=5pt,
	%     },
	action-rational/.style={
		draw,
		circle,
		thick,
		%dashed,
		radius=20pt,
	},
	action-stochastic/.style={
		draw,
		circle,
		thick,
		dashed,
		radius=20pt,
	},
	state/.style={
		draw,
		rectangle %(481pt,341pt)
	},
	state/.default=50cm,
	%%%%%%%%%%%% Decision Tree %%%%%%%%%%%%%%%%%
	treenode/.style = {shape=rectangle, rounded corners,
		draw, align=center,
		top color=white, bottom color=blue!20},
	root/.style     = {treenode, font=\Large, bottom color=red!30},
	env/.style      = {treenode, font=\ttfamily\normalsize},
	dummy/.style    = {circle,draw}
}
\tikzstyle{state-deterministic} = [draw, rectangle, rounded corners, inner sep=10pt, inner ysep=20pt]
\tikzstyle{state-uncertain} = [draw, rectangle, dashed, rounded corners, inner sep=10pt, inner ysep=20pt]
\tikzstyle{line} = [draw]
\tikzstyle{level 1}=[level distance=1.7cm, sibling distance=2.8cm]
\tikzstyle{level 2}=[level distance=2cm, sibling distance=2.8cm]
\tikzstyle{level 3}=[level distance=2.5cm, sibling distance=2.8cm]
\tikzstyle{level 4}=[level distance=1.7cm, sibling distance=2.8cm]
\tikzstyle{level 5}=[level distance=1.7cm, sibling distance=2.8cm]
\tikzstyle{level 6}=[level distance=2cm, sibling distance=2.8cm]
\tikzstyle{level 7}=[level distance=1.7cm, sibling distance=2.8cm]
\tikzstyle{level 8}=[level distance=2cm, sibling distance=2.8cm]
\tikzstyle{bag} = [text width=4em, text centered]
\tikzstyle{largeBag} = [text centered]
\tikzstyle{end} = [circle, minimum width=3pt,fill, inner sep=0pt]
\tikzstyle{decide} = [draw, diamond, minimum width=.3ex, text width=2em, text centered]
\tikzstyle{myarrow}=[->, thick, shorten >=1pt]
\tikzset{%
	mynode/.style={circle,minimum width=9ex, fill=none,draw}, % no filling
	mynode_stoc/.style={circle, dashed, minimum width=.5ex, fill=none,draw}, % no filling
	myfillnode/.style={circle, fill=black,draw}, % fill with black
}
\def\defemb#1#2{\expandafter\def\csname #1\endcsname
	{\relax\ifmmode #2\else\hbox{$#2$}\fi}}
\newcommand{\variables}[1]{\cV({#1})}
\numberwithin{equation}{section}
\theoremstyle{plain}
\newtheorem{thm}{Theorem}[section]
\newtheorem{definition}{Definition}[section]
\begin{document}
	
\begin{frontmatter}
\title{Relative Net Utility and the Saint Petersburg Paradox}
\runtitle{Relative Net Utility and the Saint Petersburg Paradox}
 \begin{aug}
 \author{\fnms{Daniel} \snm{Muller}}
 \author{\fnms{Tshilidzi} \snm{Marwala}}
\address{University of Johannesburg, Auckland Park 2000 South Africa            Email: mullerdm@gmail.com; tmarwala@gmail.com}%,
 \end{aug}
 
\begin{abstract}
	The famous Saint Petersburg Paradox (St. Petersburg Paradox) shows that the theory of expected value does not capture the real-world economics of decision-making problems. Over the years, many economic theories were developed to resolve the paradox and explain gaps in the economic value theory in the evaluation of economic decisions, the subjective utility of the expected outcomes, and risk aversion as observed in the game of the St. Petersburg Paradox. In this paper, we use the concept of the relative net utility to resolve the St. Petersburg Paradox. Because the net utility concept is able to explain both behavioral economics and the St. Petersburg Paradox, it is deemed to be a universal approach to handling utility. This paper shows how the information content of the notion of net utility value allows us to capture a broader context of the impact of a decision's possible achievements. It discusses the necessary conditions that the utility function has to conform to avoid the paradox. Combining these necessary conditions allows us to define the theorem of indifference in the evaluation of economic decisions and to present the role of the relative net utility and net utility polarity in a value rational decision-making process.  
\end{abstract}
	\begin{keyword}
		\kwd{St Petersburg paradox, net utility, reference point, expected utility theory, bounded utility}
		% \kwd{\LaTeXe}
	\end{keyword}
\end{frontmatter}
	
\section{Introduction}
The St. Petersburg Paradox is a well-known problem in probability and decision theory. The St. Petersburg Paradox shows that the theory of expected value does not capture the real-world economics of decision-making problems. The problem was formulated by Nicolas Bernoulli in 1713 and was introduced by his cousin, Daniel Bernoulli~\cite{Bernoulli1Daniel}, to the Imperial Academy of Sciences in St. Petersburg as follows:

\begin{quotation}
{\em Peter tosses a coin and continues to do so until it lands on "heads" when it comes to the ground. He agrees to give Paul one ducat if he gets "heads" on the very first throw, two ducats if he gets it on the second, four if on the third, eight if on the fourth, and so on. With each additional throw, the number of ducats he must pay is doubled. Suppose we seek to determine the value of Paul's expectation.}
\end{quotation}

In the paper, ``Exposition of a New Theory on the Measurement of Risk", \cite{Bernoulli1Daniel} suggests a solution to the paradox which is based on the subjectivity of utility of the expected outcome of the game. The St. Petersburg Paradox is referred to by \cite{samuelson1977st} as ''a dramatic and even over-dramatic case", and it illustrates that real-world economics of decision-making problems are not captured by the theory of expected value. In the history of economic literature, St. Petersburg Paradox has been a key factor in the development of the utility function theories among many other ideas that were born from its resolution attempts. Over the years, the question regarding a general utility function that captures arbitrary payoff functions has gained attention in a wide range of scientific and industrial fields and has been addressed by many experts. In his analysis of the financial and economic aspects of St. Petersburg Paradox \cite{libor2011financial} states that by using the analysis and applications of this paradox we may could have avoided some catastrophic financial situations. Analyzing high-tech stocks~\cite{szekely2004st} came to the same conclusion regarding the run-up in stock prices in the late 1990s and the subsequent declines in 2000.

Over the years, most of the solutions and attempts to explain the paradox have dealt with definitions of subjective utility function and manipulations of the utility function to fit the real world observed behavior (\cite{bernoulli19931738,weber1834pulsu,fechner1860elemente,von1947theory,von2007theory}). Bounded utility function, bounded time or bounded number of coin flips assumption remain the most favored explanations of the paradox to date (\cite{menger1934unsicherheitsmoment,samuelson1977st,menger1979role,arrow1970essays,aumann1977st,Seidl2013}). 

In this paper, we take several perspectives on the paradox, including the cost-benefit and time perspective. We start with an analysis of the paradox and present the solution attempts and the theories of expected utility that were developed from the resolution attempts. We define the notion of net utility and define a break-even point as a reference point for net utility inference. We continue with a definition of investment (or game) position and discuss the net utility of changing the position of investment and show how the reference point of break-even point along with the net utility notion allows us to capture the utility of alternative positions in the game. We then define a dynamic reference point and present an incremental net utility evaluation procedure to evaluate the alternative game positions. This allows us to present several new properties in the structure of the problem defined in the St. Petersburg Paradox and to establish the equilibrium position that maximizes the utility while preserving the invested resources. We formulate the Theorem of Indifference concerning the expected utility payoff. This allows us to discuss the opportunity cost and the time factor in investment (and St. Petersburg Paradox game). Incorporating time factor, the objective in the evaluation of the game becomes maximizing the expected net utility while minimizing the time invested in creating a net utility. We treat time as a resource and extend our observations to a general set of resources, establishing resources based criteria for tie-breaking points where there is indifference concerning the expected value of the game. 

Finally, we present the universal theory of net utility and show the role of net utility in a rational decision-making process and how the value of net utility allows for a more informed decision-making process. We present two necessary conditions that the utility function has to conform to avoid the St. Petersburg Paradox. The first condition is the boundness of the utility function. The second condition concerns the net utility polarity, i.e., the net change of the utility value that is achieved due to the decision-making process. The net utility polarity of a decision process must be positive, relative to the starting point of the decision process. The relation point is the break-even position, where the most basic break-even position in any investment is the decision not to buy in, i.e., doing nothing. Combining these necessary conditions allows us to define the theorem of indifference and to present the role of the relative net utility in a value rational decision-making process. By the term {\em value rational}, we refer to a decision-making process with net positive polarity, that is, a total net positive utility change. We conclude with the applications of our observation from St. Petersburg Paradox and net utility in artificial intelligence (AI) and Economic systems and point out some future research directions.

\section{Saint Petersburg Paradox}

We begin by presenting a table summarizing the details of the game in Table~\ref{tbl:paradoExpectedPayoffTable1}. In Table~\ref{tbl:paradoExpectedPayoffTable1} each row shows the number of coin tosses until Paul gets 'heads' for the first time and the game ends. The first column in table~\ref{tbl:paradoExpectedPayoffTable1} represents the number, $k$, of coin tosses. The second column represents the probability of getting 'heads' on $k$ coin toss. The third column represents the reward derived from getting 'heads' on $k$ toss. The fourth column represents the expected payoff from getting 'heads' on $k$ toss, which is calculated by multiplying the probability of getting "heads" on $k$ toss with the respective value of the obtained reward.

 %%%%%%% Expected Payoff Table 1 %%%%%%%%%%%%
% \input{Graphics/diagram_1_payoff}
 \renewcommand{\arraystretch}{1.6}
%\newcolumntype{S}{>{\centering\arraybackslash} m{.4\linewidth} }
\begin{table}[htb!]
	\centering
	\setlength{\tabcolsep}{.18em}
    %\newcolumntype{g}{>{\columncolor{gray}}c}
	\begin{tabular}{|c|c|c|c|}%|g|}
		k (num of throws) & $P(k)=\frac{1}{2^k}$ & Prize & Expected Payoff\\% & Net Utility of Changing Position\\
%			($\sum_{i=1}^{k}$)
		\hline
		%\hline
		1 &  $\frac{1}{2}$ &  1 & $\frac{1}{2}$\\ [0.05cm]% & ${\bf\frac{1}{2}}$\\ [0.05cm]
		\hline
		%\hline
		2 &  $\frac{1}{4}$ &  2 & $\frac{1}{2}$\\ [0.05cm]% &0\\ [0.05cm]
		\hline
		3 &  $\frac{1}{8}$ &  4 & $\frac{1}{2}$\\ [0.05cm]% &0\\ [0.05cm]
		\hline
		4 &  $\frac{1}{16}$ &  8 & $\frac{1}{2}$\\ [0.05cm]% &0\\ [0.05cm]
		\hline
		5 &  $\frac{1}{32}$ & 16 & $\frac{1}{2}$\\ [0.05cm]% &0\\ [0.05cm]
		\hline
		6 &  $\frac{1}{64}$ & 32 & $\frac{1}{2}$\\ [0.05cm]% &0\\ [0.05cm]
		\hline  
		7 &  $\frac{1}{128}$ & 64 & $\frac{1}{2}$\\ [0.05cm]% &0\\ [0.05cm]
		\hline   
        $\dots$ &  $\dots$ & $\dots$ & $\dots$\\ %[0.05cm]% 
	\end{tabular}
    \caption{\label{tbl:paradoExpectedPayoffTable1} Illustration of the St. Petersburg Paradox game in terms of number of tosses (k), the probability to get 'heads' on toss $k$, the respective prize, and the respective expected payoff value.}
\end{table} 
%%%%%%% Expected Payoff Table 1 %%%%%%%%%%%%

The mathematical value of a fair price for playing the game is the accumulation of expected payoffs, 
$$\frac{1}{2} + \frac{1}{2} + \frac{1}{2} + \frac{1}{2} .... = \infty.$$
This implies that a player will be willing to pay any price, up to infinity, to participate in the game. However, most of the agents will pay a small amount of money to participate in the game, contradicting the prediction of the expected value theory. This is the paradox. While most of the literature talks about the paradox of the infinite value of the game, there is an additional paradox which has been ignored. The minimum win if one chooses to participate in this game is 1 ducat even though utility theory expects a win of a half a ducat. This is because the definition of expected utility is a first-order approximation of the real utility. This can be viewed as the first-order components of a Taylor expansion of the real utility.

The failure of the expected value theory to predict the real economic value of the game has attracted many researchers over the years and served an important role in the development of economic theories. A survey of the history of ideas, theoretical and empirical, to explain and solve the paradox was conducted by \cite{samuelson1977st,neugebauer2010moral,Seidl2013,Cox2019}. The explanations of this well established challenge of the expected utility theory concentrated on diminishing marginal utility function, treating small probabilities as zero, the boundness of utility function, time boundness or a bounded number of flip coins.

The utility function concept was first introduced by~\cite{Bernoulli1Daniel} to represent the subjective value of winnings and is also referred to as the moral value or the moral hope by Gabrial Cremer in Bernoulli's paper. Gabriel Cremer and Daniel Bernoulli explained the paradox by decreasing marginal utility and suggested concave transformation to solve the paradox. Cremer suggested a square root of the original utility function and Bernoulli suggested a natural logarithm. The subjectivity assumption was experimentally confirmed ~\cite{weber1834pulsu,fechner1860elemente} and formalized in the Expected Utility model~\cite{von1947theory,von2007theory}. Some experimental works showed that the concepts of marginal utility and perception of small probabilities as zero cannot explain the paradox but support Bernoulli's observations of risk-aversion tendency in decision-making~\cite{cox2011empirical,neugebauer2010moral,Cox2019}. Bounded utility function or bounded number of coin flips assumption remains the most favored explanation of the paradox to date.

\cite{menger1934unsicherheitsmoment} work on the St. Petersburg Paradox showed that the marginal diminishing utility transformation is not sufficient since the rewards can increase faster than the rate at which the utility diminishes. This observation led to the conclusion that boundedness of the utility function or coin flips is a necessary condition to prevent the occurrence of a St. Petersburg
Paradox~\cite{menger1934unsicherheitsmoment,menger1979role,arrow1970essays,aumann1977st,Seidl2013}. The time boundedness assumption~\cite{brito1975becker,aumann1977st,cowen1988time} suggests that limited
time will bind the utility function. Even infinite rewards will create finite utility with the natural bound of individual life. In this paper, in the context of boundedness in the St. Petersburg paradox, we discuss the basic bounds of the budget and time of the player and provide a framework to trace and set bounds on the net utility change between alternative game scenarios, i.e., the boundedness of the incremental change in net utility rather than the boundedness of the utility function outcomes. 

Non-Expected Utility theories were developed due to the paradox in the expected utility theory, and these include: the dual theory of choice under risk~\cite{MenachemYaari1987}, prospect theory~\cite{kahneman1979prospect} and the cumulative prospect theory~\cite{tversky1992advances}. These theories do not address the St. Petersburg Paradox, however, they provide several important observations that can be leveraged to solve the paradox. 

The concept of a reference point and relative utility is observed in the prospect theory. Prospect theory assumes the existence of a reference point for the perception of decisions outcomes as losses or gains. Recent work by~\cite{Werner2019} provides a survey on models with reference points and suggests a model to reveal a reference point in the prospect theory. In the next section, we leverage prospect theory observations on the reference point and relative utility to build a framework to reveal dynamic reference points for net utility change of game positions in the St. Petersburg Paradox, which we exploit to choose a policy and evaluate the game. The theoretic contribution of this paper suggests an explanation for the experimental observations in the expected utility theory and behavioral economics on reference points and relative utility as well as risk-aversion and diminishing marginal utility.    

\subsection{Petersburg Game decision-making Process Diagram}
Representation of the game as in Table~\ref{tbl:paradoExpectedPayoffTable1} is common in most discussions about St. Petersburg Paradox, although it is missing a few important details regarding the decision-making process of participating in the game and the evaluation of possible scenarios in the game. In what follows we take a deeper look at the decision-making process in the St. Petersburg Paradox game and provide a diagram to illustrate it.

%%%%%%% GAME DIAGRAM %%%%%%%%%%%%
% \input{Graphics/diagram_1_main}
\begin{figure}[ht!]
%\scriptsize
%\begin{minipage}[c]{\textwidth}
\centering
\makebox[0pt]{\footnotesize
\begin{tikzpicture}
        [
        	grow=down,
            emph/.style={edge from parent/.style={dashed, black, draw}},
    		norm/.style={edge from parent/.style={solid,black,thin,draw,->, thick, shorten >=1pt}},
        	sloped
        ]
\node[state,norm] (init) {Init}
child [norm]{
	node (decide) [decide] {Buy In?} 
%         child [norm]{
%                 node[myfillnode] {}      
%                 edge from parent 
%         }
        child[norm] {
              node (pay) [mynode] {Pay}
              child[norm, grow=right]{
                    node (flip1) [mynode_stoc] {toss-1}
                    child[norm] {
                        node[bag]{$H$}           
                        edge from parent 
                        node[above] {$H$, $\frac{1}{2}$}
                    }
                    child [norm]{
                          node[mynode_stoc] {toss-1}  
                          child [norm]{
                                  node[bag]{$T,H$}   
                                  edge from parent
                                  node[above] {$H$, $\frac{1}{2}$}
                          }
                          child [norm]{
                                node [] {}
                                child [emph]{
                                      %node [bag] {}      
                                      %child [norm] {
                                          %node [mynode_stoc, right of=dots_1, node distance=0.7cm] {Flipn}
                                          node [mynode_stoc] {toss-n}
                                          child [norm]{
                                              node[bag]{$T,T,T, \dots ,H$}  
                                              edge from parent
                                              node[above] {$H$, $\frac{1}{2}$}
                                          }
                                          child [norm]{
                                              %node [] {} 
                                              child [emph] {
                                              %node [bag, right of=dots, node distance=0.3cm] {....}
                                                node [bag] {}
                                                edge from parent                                    
                                          	  }
                                          edge from parent         
                                          node[above] {$T$, $\frac{1}{2}$}
                                          }  
                                          %edge from parent
                                    % }
                                    edge from parent 
                                }
                                edge from parent
                                node[above] {$T$, $\frac{1}{2}$}
                          }   
                          edge from parent
                          node[above] {$T$, $\frac{1}{2}$}
        			}
        			edge from parent     
                    node[above] {$-C$}
            	}
            edge from parent  
            node (dum) [] {}
            node[above of=dum, node distance=0.3cm] {Yes}
  		}
        node (dummy_no) [right of=decide, node distance=1.5cm] {}
        node (dummy_no_mid) [right of=decide, node distance=1.1cm] {}
        node [above of=dummy_no_mid, node distance=0.4cm] {No}
        node(dummy_above)[right of=init, node distance=1.5cm] {}
};
 \draw[myarrow] (decide.east) -- (dummy_no.center) -- (dummy_above.center) -- (init.east);
%%%%%%%%%%%%%%%%%%%%%%%% TIME %%%%%%%%%%%%%%%%%%%%%%%%%%%%%%%%%%%%%%%%%%%
     % draw horizontal line   
    \draw (0,-6.5) -- (8.5,-6.5);
    \draw[dashed] (8.5,-6.5) -- (12,-6.5);
    \draw (2.5,-6.5) node[below=3pt] {$ toss-1 $} node[above=3pt] {$ t=0 $};
    \draw (4.5,-6.5) node[below=3pt] {$ toss-2 $} node[above=3pt] {$ t=1 $};
    \draw (8,-6.5) node[below=3pt] {$ toss-n $} node[above=3pt] {$ t=n $};
%%%%%%%%%%%%%%%%%%%%%%%% TIME %%%%%%%%%%%%%%%%%%%%%%%%%%%%%%%%%%%%%%%%%%%
%%%%%%%%%%%%%%%%%%%%%%%% BUDGET %%%%%%%%%%%%%%%%%%%%%%%%%%%%%%%%%%%%%%%%%
     % draw horizontal line   
    \draw (0,-8.5) -- (8.5,-8.5);
    \draw[dashed] (8.5,-8.5) -- (12,-8.5);
    \draw (0,-8.5) node[below=3pt] {$Init$ $Pos.$} node[above=3pt] {$ B=b $};
    %\draw (0,-9) node[below=3pt] {$Buy in$ $Pos.$};
    %\draw (0,-8.9) node[below=3pt] {$Position$};
    \draw (2.5,-8.5) node[below=3pt] {$ Buy $ $ In$} node[above=3pt] {$ b-c $};
    \draw (4.5,-8.5) node[below=3pt] {$ toss-2 $} node[above=3pt] {$ b-c+1 $};
    \draw (8,-8.5) node[below=3pt] {$ toss-n $} node[above=3pt] {$ b-c+2^{n-1} $};
%%%%%%%%%%%%%%%%%%%%%%%% BUDGET %%%%%%%%%%%%%%%%%%%%%%%%%%%%%%%%%%%%%%%%%   
%%%%%%%%%%%%%%%%%%%%%%%% EXPECTATIONS %%%%%%%%%%%%%%%%%%%%%%%%%%%%%%%%%%%%%%%%%
     % draw horizontal line   
    \draw (0,-10.5) -- (8.5,-10.5);
    \draw[dashed] (8.5,-10.5) -- (12,-10.5);
    \draw (0,-10.5) node[below=3pt] {$Init$ $Pos.$} node[above=3pt] {$ E=0 $};
    %\draw (0,-10.9) node[below=3pt] {$Position$};
    \draw (2.5,-10.5) node[below=3pt] {$ Buy $ $ In$} node[above=3pt] {$ \frac{1}{2} $};
    \draw (4.5,-10.5) node[below=3pt] {$ toss-2 $} node[above=3pt] {$ \frac{1}{2} $};
    \draw (8,-10.5) node[below=3pt] {$ toss-n $} node[above=3pt] {$ \frac{1}{2} $};
%%%%%%%%%%%%%%%%%%%%%%%% EXPEXTATIONS %%%%%%%%%%%%%%%%%%%%%%%%%%%%%%%%%%%%%%%%%  
%%%%%%%%%%%%%%%%%%%%%%%% EXPECTATIONS %%%%%%%%%%%%%%%%%%%%%%%%%%%%%%%%%%%%%%%%%
     % draw horizontal line   
    \draw (0,-12.5) -- (8.5,-12.5);
    \draw[dashed] (8.5,-12.5) -- (12,-12.5);
    \draw (0,-12.5) node[below=3pt] {$Init$ $Pos.$} node[above=3pt] {$ E(B)=b $};
    %\draw (0,-12.9) node[below=3pt] {$Position$};
    \draw (2.5,-12.5) node[below=3pt] {$ Buy $ $ In$} node[above=3pt] {$ b-c+\frac{1}{2} $};
    \draw (4.5,-12.5) node[below=3pt] {$ toss-2 $} node[above=3pt] {$ b-c+\frac{1}{2} $};
    \draw (8,-12.5) node[below=3pt] {$ toss-n $} node[above=3pt] {$ b-c+\frac{1}{2} $};
%%%%%%%%%%%%%%%%%%%%%%%% EXPEXTATIONS %%%%%%%%%%%%%%%%%%%%%%%%%%%%%%%%%%%%%%%%% 
%\end{tikzpicture}
\end{tikzpicture}
}
%\end{minipage}
\caption{\label{fig:main_game_diagram} An Illustration of a single game as described in St. Petersburg Paradox. The game illustrated with a decision-event diagram on a scale of time, budget, expected value and expected total budget associated with achievable positions in the game.}
\end{figure}
%%%%%%% GAME DIAGRAM %%%%%%%%%%%%

Figure~\ref{fig:main_game_diagram} illustrates an event-decision diagram of a single St. Petersburg paradox game that Peter suggests to Paul. The diagram represents the game scenarios and the decision process that Paul has to go through in the game that Peter suggests. Generally speaking, the process constitutes the states in the game, one decision, and one deterministic action followed by a sequence of stochastic events. The process starts from an initial state as illustrated with a rectangle and represents the position before the game starts. A decision state is illustrated with a rhombus and represents Paul's decision of whether to participate in the game or not. An action state is illustrated with a circle and represents the action of Paul's payment to Peter in case he decides to participate in the game. A sequence of stochastic events illustrated with dashed circles represents the sequence of tosses until the coin lands on 'heads'.

The game as illustrated in Figure~\ref{fig:main_game_diagram} can be analyzed from different time, value and budget perspectives. In what follows we will cover some of the main perspectives used to analyze the game and to estimate the expected value of the game.
 
We now take a closer look at the different game scenarios in St. Petersburg Paradox which is represented by the diagram in Figure~\ref{fig:main_game_diagram}. Paul's budget at the initial state of the game is $B=b$ ducats. Peter suggests that Paul participate in the game which leaves Paul with two choices: If Paul declines Peter's offer, he stays in his initial position with the initial budget of $b$ ducat. If he accepts Peter's offer, he has to pay $c$ ducats to participate in the game. Once Paul decides to buy into the game, the decision phase is done, the action of paying for the game is applied and the reward is determined according to the outcome of a sequence of stochastic events. By stochastic events, we refer to the sequence of coin tosses until the coin lands on 'heads'. 

Given the terms of the game, Paul can calculate the expected {\bf brake-even point} for the amount of money $c$ that was paid to participate in the game, within his budget restrictions $c<B$, that he is willing and able to pay to participate. Based on the fact that only one scenario of events from the presented scenarios can occur in one game, Paul chooses his expected position from different prospective scenarios, i.e., his payoff expectation for brake-even or an improved position from participation in the game. As illustrated in this example, the expected value of each scenario of sequence of tosses that ends with 'heads' is $\frac{1}{2}$ which can be calculated by the multiplication of scenario probability and the related prize as illustrated in Table~\ref{tbl:paradoExpectedPayoffTable1}. Hence for each sequence of tosses that ends with 'heads' the break-even expected value is $\frac{1}{2}$. This is without taking the cost of time into account. Taking into account the invested time in the game (which can be represented as a monetary value through lost alternatives of investments also referred to as the opportunity cost), we can formulate a decreasing utility function. The rate of the decrease is dependent on the alternative investments. Paul is aware of his evaluation methods of alternative investments (or opportunities).

\section{Net Utility and St. Petersburg Paradox}

\subsection{Evaluation with Respect to Break-Even Point}
To put it simply, the break-even point is where the costs we pay for participating in the game is equal to the benefits that are derived from participating in the game. 
The first and the most basic break-even point in the game is the initial state which is achieved with the decision of not participating in the game. 

\subsection{The Decision-Choice of Not Participating in the Game}
The decision to do nothing, depicted in decision nodes 'Buy In' in the game diagram in Figure~\ref{fig:main_game_diagram}, is always implicitly available. To evaluate the alternatives and the expected outcomes of the game, we set the decision-choice of doing nothing, i.e. not participating in the game, as the first and the most basic reference point.

The reference point of 'doing nothing' will allow us to define the break-even point of the game. When we take into account the costs of participating in the game and the budget of the participant, the reference point of 'doing nothing' is a break-even point. We define the break-even point as follow.

\begin{definition}
Given a sequence of events $e_0,e_1,e_2 \dots e_k$ a cost function $\costfunc({e_i})$ and a value function $\valuefunc({e_i})$ for each $i \leq k$ the break even point of the game $BEP$ occurs when $\sum_{i=1}^{k}{\valuefunc({e_i})} = \sum_{i=1}^{k}{\costfunc({e_i})}$. 
\end{definition}

We define the break-even as the utility achieved in the last event (heads on a coin toss) of a sequence of events. The cost, however, is paid at the initial point of the sequence of events with the decision to participate in the game. The expected payoff depends on both "tails" and "heads". The number of "tails" determines the payoff whereas the "heads" is a critical indicator of when the game is stopped and the payoff effected.

 %%%%%%% Expected Payoff Table 2 %%%%%%%%%%%%
%  \input{Graphics/diagram_2_payoff}
 \renewcommand{\arraystretch}{1.6}
%\newcolumntype{S}{>{\centering\arraybackslash} m{.4\linewidth} }
\begin{table}[htb]
	\centering
	\setlength{\tabcolsep}{.18em}
    %\newcolumntype{g}{>{\columncolor{gray}}c}
	\begin{tabular}{|c|c|c|c|}%|g|}
		k (num of throws) & $P(k)=\frac{1}{2^k}$ & Prize & Expected Payoff\\% & Net Utility of Changing Position\\
%			($\sum_{i=1}^{k}$)
		\hline
\rowcolor{gray}	Init/Fold &  -- & 0 & 0\\ [0.05cm]% & 0\\ [0.05cm]
		\hline
		%\hline
		1 &  $\frac{1}{2}$ &  1 & $\frac{1}{2}$\\ [0.05cm]% & ${\bf\frac{1}{2}}$\\ [0.05cm]
		\hline
		%\hline
		2 &  $\frac{1}{4}$ &  2 & $\frac{1}{2}$\\ [0.05cm]% &0\\ [0.05cm]
		\hline
		3 &  $\frac{1}{8}$ &  4 & $\frac{1}{2}$\\ [0.05cm]% &0\\ [0.05cm]
		\hline
		4 &  $\frac{1}{16}$ &  8 & $\frac{1}{2}$\\ [0.05cm]% &0\\ [0.05cm]
		\hline
		5 &  $\frac{1}{32}$ & 16 & $\frac{1}{2}$\\ [0.05cm]% &0\\ [0.05cm]
		\hline
		6 &  $\frac{1}{64}$ & 32 & $\frac{1}{2}$\\ [0.05cm]% &0\\ [0.05cm]
		\hline  
		7 &  $\frac{1}{128}$ & 64 & $\frac{1}{2}$\\ [0.05cm]% &0\\ [0.05cm]
		\hline   
        $\dots$ &  $\dots$ & $\dots$ & $\dots$\\ %[0.05cm]% 
	\end{tabular}
    \caption{\label{tbl:paradoExpectedPayoffTable2} Illustration of the St. Petersburg Paradox game as in Table~\ref{tbl:paradoExpectedPayoffTable1} with the initial position description added, colored in gray, which describes additionally the 'init' state and the decision of not participating the game.}
\end{table} 

 %%%%%%% Expected Payoff Table 2 %%%%%%%%%%%%
 
Table~\ref{tbl:paradoExpectedPayoffTable2} extends Table~\ref{tbl:paradoExpectedPayoffTable1} with the details of the initial position at the initial state, which is depicted in nodes 'init' in the game diagram in Figure~\ref{fig:main_game_diagram}. The additional row 'Init/Fold' in Table~\ref{tbl:paradoExpectedPayoffTable2} represents the pre-decision state where there is no reward and no costs paid to participate in the game. It also represents the decision "Fold", that is the decision not to participate in the game in which case the reward, as well as the expected payoff, are 0. 

\subsection{Net Utility of Changing Position}

In what follows we present an incremental evaluation approach to find the optimal position in the Saint Petersburg Paradox game.

Using the  Oxford dictionary definition~\cite{dictionary2008oxford} we define a Game Position as follows:
\begin{definition}
A Game Position is:\\
\begin{itemize}
    \item A person's point of view or attitude towards the game payoff
    \item The extent to which an investor, dealer, or speculator has made a commitment in the market by buying or selling securities.
\end{itemize} 
\end{definition}

\begin{definition}
Given a set of game positions $p_0,p_1...p_k...$, the utility of game position $p_k$ in the St. Petersburg Paradox, $\utilfunc(p_k)$ is the expected payoff of a game with $k$ coin flips.\\
\end{definition}

%%%%%%% Expected Payoff Table 3 %%%%%%%%%%%%
% \input{Graphics/diagram_3_payoff}
\renewcommand{\arraystretch}{1.6}
%\newcolumntype{S}{>{\centering\arraybackslash} m{.4\linewidth} }
\begin{table}[htb]
	\centering
	\setlength{\tabcolsep}{.18em}
    \newcolumntype{g}{>{\columncolor{gray}}c}
	\begin{tabular}{|c|c|c|c|g|}
		k (num of throws) & $P(k)=\frac{1}{2^k}$ & Prize & Expected Payoff & Net Utility\\
%			($\sum_{i=1}^{k}$)
		\hline
\rowcolor{gray}	Fold &  -- & -- & 0 & 0\\ [0.05cm]
		\hline
		%\hline
		1 &  $\frac{1}{2}$ &  1 & $\frac{1}{2}$ & ${\bf\frac{1}{2}}$\\ [0.05cm]
		\hline
		%\hline
		2 &  $\frac{1}{4}$ &  2 & $\frac{1}{2}$ &$\frac{1}{2}$\\ [0.05cm]
		\hline
		3 &  $\frac{1}{8}$ &  4 & $\frac{1}{2}$ &$\frac{1}{2}$\\ [0.05cm]
		\hline
		4 &  $\frac{1}{16}$ &  8 & $\frac{1}{2}$ &$\frac{1}{2}$\\ [0.05cm]
		\hline
		5 &  $\frac{1}{32}$ & 16 & $\frac{1}{2}$ &$\frac{1}{2}$\\ [0.05cm]
		\hline
		6 &  $\frac{1}{64}$ & 32 & $\frac{1}{2}$ &$\frac{1}{2}$\\ [0.05cm]
		\hline  
		7 &  $\frac{1}{128}$ & 64 & $\frac{1}{2}$ &$\frac{1}{2}$\\ [0.05cm]
		\hline   
        $\dots$ &  $\dots$ & $\dots$ & $\dots$ &$\dots$\\ [0.05cm]
	\end{tabular}
    \caption{\label{tbl:paradoExpectedPayoffTable3} Illustration of the St. Petersburg Paradox game with the initial position details (colored in gray row) and a "Net Utility" with respect to initial position (colored in gray column).}
\end{table}  
%\end{example}

%%%%%%% Expected Payoff Table 3 %%%%%%%%%%%%
Our initial reference point is the break-even point in the initial state with the value of 0 (no costs nor benefit produced in the game). Having an initial reference point, we can evaluate other applicable positions of Paul in the game with respect to the initial break-even point. 

We define the net utility of game position with respect to a reference point point as follows. 
\begin{definition} 
Given a set of game positions $p_0,p_1...p_k...$, the net utility of game position $p_k$ with respect to a reference position $p_i$ is the net change in the expected payoff, i.e., $\utilfunc(p_k) - \utilfunc(p_i)$.
\end{definition}

Assuming the initial position with $\utilfunc(p_0) = 0$ we have a special case where the utility of game position $p_k$ is also the net change in the expected payoff from the initial position where $i=0$ and equals, $\utilfunc(p_k)$. However, in most of the real-world scenarios, we cannot assume that zero utility for the state in which we consider an investment or gamble decision. Hence, we can not assume that considering the utility of an arbitrary game position will reflect the real impact of obtaining it. In what follows, we show how the informativeness of the notion of net utility allows us to capture a broader context of the impact of a decision's possible achievements.

Table~\ref{tbl:paradoExpectedPayoffTable3} extends Table~\ref{tbl:paradoExpectedPayoffTable2} with the details of the net utility of changing position with respect to the initial state. As Table~\ref{tbl:paradoExpectedPayoffTable3} shows, in terms of the net utility with respect to our initial position, there is indifference with regard to each position with $k>0$.

\subsection{Dynamic Reference Point for Net Utility of Changing Position}

Having an initial reference point, we can evaluate other applicable positions of Paul in the game and update the reference point dynamically when a reference point with better utility is found.

%%%%%%% Expected Payoff Table 4 %%%%%%%%%%%%
% \input{Graphics/diagram_4_payoff}
%\begin{example}
	\renewcommand{\arraystretch}{1.6}
	%\newcolumntype{S}{>{\centering\arraybackslash} m{.4\linewidth} }
	\begin{table}[htb]
		\centering
		\setlength{\tabcolsep}{.18em}
        \newcolumntype{g}{>{\columncolor{gray}}c}
		\begin{tabular}{|c|c|c|c|g|g|}
			k (num of throws) & $P(k)=\frac{1}{2^k}$ & Prize & Expected Payoff & Net Utility & Relative Net Utility\\
%			($\sum_{i=1}^{k}$)
			\hline
\rowcolor{gray}	Fold &  -- & -- & 0 & 0 & 0\\ [0.05cm]
			\hline
			\hline
			\rowcolor{yellow}1 &  $\frac{1}{2}$ &  1 & $\frac{1}{2}$& $\frac{1}{2}$ & ${\bf\frac{1}{2}}$\\ [0.05cm]
			\hline
			\hline
			2 &  $\frac{1}{4}$ &  2 & $\frac{1}{2}$ & $\frac{1}{2}$ &0\\ [0.05cm]
			\hline
			3 &  $\frac{1}{8}$ &  4 & $\frac{1}{2}$ & $\frac{1}{2}$ &0\\ [0.05cm]
			\hline
			4 &  $\frac{1}{16}$ &  8 & $\frac{1}{2}$ & $\frac{1}{2}$ &0\\ [0.05cm]
			\hline
			5 &  $\frac{1}{32}$ & 16 & $\frac{1}{2}$ & $\frac{1}{2}$ &0\\ [0.05cm]
			\hline
			6 &  $\frac{1}{64}$ & 32 & $\frac{1}{2}$ & $\frac{1}{2}$ &0\\ [0.05cm]
			\hline  
			7 &  $\frac{1}{128}$ & 64 & $\frac{1}{2}$ & $\frac{1}{2}$ &0\\ [0.05cm]
 			\hline   
            $\dots$ &  $\dots$ & $\dots$ & $\dots$ & $\dots$ & $\dots$\\ [0.05cm]

		\end{tabular}
        \caption{\label{tbl:paradoExpectedPayoffTable4} Illustration of the St. Petersburg Paradox game with the initial position details (colored in gray row) and a "Relative (dynamic) Net Utility" with respect to best position found so far (colored in gray column). Highlighted in yellow is the first (and last) beneficial changing position decision}
	\end{table}  
%\end{example}

%%%%%%% Expected Payoff Table 4 %%%%%%%%%%%%

\begin{figure}[htb]
	\footnotesize
    	\begin{tabbing}
		\underline{Incremental-Net-Utility-Evaluation}\\ 
		\; $RefPos := 0$ \;\;\;\;\;\;\;\;\textcolor{blue}{\footnotesize \em // Initialize Reference Position}\\
		\; $CurrPos := 0$
		\;\;\;\;\;\;\textcolor{blue}{\footnotesize \em // Initialize Current Position}\\
		\; $NUC := 0$ 
		\;\;\;\;\;\;\;\;\;\;\;\textcolor{blue}{\footnotesize \em // Initialize Net Utility Change}\\
		\; $NUC* := 0$ 
		\;\;\;\;\;\;\;\;\;\textcolor{blue}{\footnotesize \em // Initialize Best Net Utility Change so far}\\
		\; {\bf loop:} \\
		\;\;\;\; $CurrPos = CurrPos + 1$\\ 
		\;\;\;\; $NUC = EU(CurrPos) - EU(RefPos)$
		\;\;\;\;\;\;\textcolor{blue}{\footnotesize \em // Calculate Expected Net Utility Change}\\
		\;\;\;\; {\bf if} $NUC >  NUC*$:\\
		\;\;\;\;\;\;\;\; $NUC* =  NUC$\\
		\;\;\;\;\;\;\;\; $RefPos = CurrPos$\\
		\;\;\;\; {\bf if} Termination-criteria:\\
		\;\;\;\;\;\;\;\; return $NUC*$, $RefPos$\\

	\end{tabbing}
	
	\caption{\label{fig:inc-net-util} Incremental net utility evaluation with dynamic reference point update}
\end{figure}

Table~\ref{tbl:paradoExpectedPayoffTable4} extends Table~\ref{tbl:paradoExpectedPayoffTable3} with the details of the net utility of changing position with respect to the best position found so far. Figure~\ref{fig:inc-net-util} depicts a pseudo-code of an incremental net utility evaluation with dynamic reference point which puts into practice our observations on the net utility change in the St. Petersburg Paradox.

In the next section, we present the Theorem of Indifference and resource preserving tie-breaking decision criteria which allows us to define the termination criteria for the incremental net utility evaluation procedure. The Theorem of Indifference will allow us to find the equilibrium in {\em effective}. The net utility change and the resource preserving tie-breaking decision criteria will allow us to define an {\em efficient} evaluation termination within the equilibrium zone.

\FloatBarrier

\section{Theorem of Indifference and Resource Preserving Tie-Breaking Decision Criteria}

In the St. Petersburg game, we pursue an optimal position to improve the net utility. Table~\ref{tbl:paradoExpectedPayoffTable4} shows that the increase in the number of coin flips (the incremental change in game position) after the initial coin flip, results in no change in the expected utility, with respect to the reference point. Although the Payoff might be increasing with the increased number of coin flips, the probability of achieving the payoff is decreasing at a rate that preserves the same expected utility for each game position. This results in the increase of risk in the game, respective to the number of flip coins (i.e. the respective game positions) without the justification of improved expected utility. We now examine how to find a game position that achieves equilibrium in the game which preserves resources without damaging the potential expected utility. Note that in what follows we optimize the evaluation process of choosing Paul's game position and by doing that we find the resource preserving position among equally attractive positions in terms of the expected utility\footnote{This can be seen as a process of planning a solution, i.e. planning a decision in the case of the St. Petersburg Paradox}.

The evaluation process of choosing a position in the game that is described in Figure~\ref{fig:inc-net-util} consists of the evaluation actions of incremental change and evaluation of game positions. Treating the evaluation process as a sequential action application allows us to adopt action planning methods.
Theorem~\ref{thm:suffix} was suggested by \cite{muller2018value} to solve problems of sequential action planning in which the objective is to choose a sequence of actions to maximize agent utility under budget restrictions. It allows us to optimize a sequence of actions that lead to effective net positive change in the utility by truncating actions that exploit resources (such as time) and do not lead to the net positive utility value change. This will allow us to define efficient termination criteria for the incremental net utility evaluation procedure (Figure~\ref{fig:inc-net-util}).  
Put simply, the Theorem states that; for each sequence of actions $\plan$, there is a sub-sequence $\plan'$ that; (i) ends with a net positive utility value action (net utility of the action of changing game position in the St. Petersburg Paradox), (ii) is at most as costly as $\plan$, and (iii) is at least as valuable as $\plan$.

\begin{thm}[Theorem of Indifference]
	\label{thm:suffix}
	Given a game $\ptask$ with an additive utility function $\utilfunc$, for any policy $\plan$ for $\ptask$ such that  $\utilfunc(\state\applied{\plan})>\utilfunc(\init)$, there is a prefix $\plan'$ of $\plan$ such that:
	\begin{enumerate}
		\item $\utilfunc(\init\applied{\plan})\le \utilfunc(\init\applied{\plan'})$, and
		\item for the last action $\action_{last}$ along $\plan'$, we have $\utilfunc(\action_{last})>0$.
	\end{enumerate}
\end{thm}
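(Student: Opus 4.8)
The plan is to turn additivity into a telescoping sum and then invoke an elementary fact about partial sums of reals. First I would write $\plan=\langle\action_1,\dots,\action_n\rangle$ and, for $0\le j\le n$, set $U_j:=\utilfunc(\init\applied{\langle\action_1,\dots,\action_j\rangle})$, the utility reached after executing the length-$j$ prefix of $\plan$, so that $U_0=\utilfunc(\init)$ and $U_n=\utilfunc(\init\applied{\plan})$. Additivity of $\utilfunc$ says precisely that each step contributes its own action utility, i.e. $U_j-U_{j-1}=\utilfunc(\action_j)$ for every $j$, and hence $U_n=U_0+\sum_{i=1}^{n}\utilfunc(\action_i)$; moreover this net change $\sum_i\utilfunc(\action_i)$ depends only on $\plan$, not on the state it is applied from. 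The hypothesis $\utilfunc(\state\applied{\plan})>\utilfunc(\init)$ therefore forces $\sum_{i=1}^{n}\utilfunc(\action_i)>0$, so $n\ge1$ and $U_n>U_0$.

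Next I would take as the prefix the one that maximizes the running utility: let $j^\star$ be the \emph{smallest} index in $\{0,1,\dots,n\}$ with $U_{j^\star}=\max_{0\le j\le n}U_j$, and put $\plan':=\langle\action_1,\dots,\action_{j^\star}\rangle$. Since $U_n>U_0$, the maximum strictly exceeds $U_0$, so $j^\star\ge1$ and $\plan'$ is a nonempty prefix of $\plan$. Condition~(1) is then immediate from $j^\star$ being an argmax: $\utilfunc(\init\applied{\plan})=U_n\le U_{j^\star}=\utilfunc(\init\applied{\plan'})$. For condition~(2), minimality of $j^\star$ means the strictly smaller index $j^\star-1$ does not attain the maximum, so $U_{j^\star-1}<U_{j^\star}$; hence the last action $\action_{last}=\action_{j^\star}$ of $\plan'$ satisfies $\utilfunc(\action_{last})=U_{j^\star}-U_{j^\star-1}>0$. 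This is exactly the ``truncate everything after the last beneficial step'' idea, made precise.

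I expect the only genuinely delicate part to be the bookkeeping around the definitions rather than any inequality: I need ``additive utility function'' together with the execution semantics $\applied{\cdot}$ to deliver exactly the telescoping identity $U_j-U_{j-1}=\utilfunc(\action_j)$ and the state-independence of the net change, which is what reconciles the hypothesis (stated with $\state$) with the conclusion (stated with $\init$). If ``policy'' is meant in a branching or stochastic sense rather than as a linear action sequence, the same argument should still go through after replacing $U_j$ by the expected utility ``at depth $j$'' and ``prefix'' by the policy truncated at that depth, using linearity of expectation to preserve the telescoping identity; but in the St.\ Petersburg setting the relevant policy is the linear sequence of ``commit one more coin flip'' decisions, so the one‑dimensional argument above already suffices, and this matches the behaviour seen in Table~\ref{tbl:paradoExpectedPayoffTable4}, where only the first step (Fold $\to k=1$) carries positive action utility.
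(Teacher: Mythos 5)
Your proof is correct, but it proceeds differently from the paper's. The paper argues by induction on the plan length: if the last action of $\plan$ has positive net utility it takes $\plan'=\plan$, and otherwise it strips the (non-positive) last action, notes that $\utilfunc(\state\applied{\plan_n})>\utilfunc(\init)$ still holds, and invokes the induction hypothesis, so the prefix it produces is the one obtained by deleting the maximal trailing block of zero-or-negative actions. You instead give a direct, non-inductive argument: telescope additivity into the running utilities $U_j$, observe $U_n>U_0$, and take the earliest maximizer $j^\star$ of $j\mapsto U_j$, which forces $U_{j^\star-1}<U_{j^\star}$ and hence a strictly positive last action. Your route buys a few things the paper's does not make explicit: it isolates exactly where additivity is used (the identity $U_j-U_{j-1}=\utilfunc(\action_j)$ and the state-independence of the net change, which is what reconciles the hypothesis stated at $\state$ with the conclusion stated at $\init$ --- a point the paper's proof silently conflates), and it yields a slightly stronger conclusion, namely that $\plan'$ dominates \emph{every} prefix of $\plan$, not only $\plan$ itself. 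The paper's induction, in turn, is closer in spirit to the incremental evaluation procedure of Figure~\ref{fig:inc-net-util} (peel off unprofitable tail steps one at a time) and produces the longest admissible prefix, whereas yours produces the shortest; both choices satisfy the theorem, so the difference is one of presentation rather than substance.
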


\begin{proof}
	The proof is by induction on the plan length $n$. 
	For $n=1$, we have $\plan = \tuple{\action_{1}}$, and since $\utilfunc(\state\applied{\plan})>\utilfunc(\init)$, the action $\action_1$ has a positive net utility value. Hence, $\plan' = \plan$ satisfies the claims. Assuming that the claim holds for $n \geq 1$, we now prove it for $n+1$. 
	
	Considering a plan $\plan = \tuple{\action_1,\dots,\action_{n+1}}$, for $i \in \range{n+1}$, let $\plan_i$ denote the prefix of $\plan$ consisting of its first $i$ actions. If the last action $\action_{n+1}$ along $\plan$ has a positive net utility value, then we are done with $\plan' = \plan$. Otherwise, if $\action_{n+1}$ has either negative or zero net utility value, then $\utilfunc(\state\applied{\plan})>\utilfunc(\init)$ in particular implies 
	$\utilfunc(\state\applied{\plan_n}) >\utilfunc(\init)$. If $\plan'$ is a prefix of $\plan_n$ that satisfies the lemma by our assumption of induction, then $\plan'$ also satisfies the lemma with respect to $\plan$ since 
	$\utilfunc(\plan') \geq  \utilfunc(\state\applied{\plan_n}) \geq \utilfunc(\state\applied{\plan}).$ 
\end{proof}

\subsection{Tie-Breaking Decision Criteria}

Although the expected utility is the same for all alternatives when choosing positions in the game, it would be wrong to state that a rational player would play infinite games. In terms of the expected value, there is no benefit to changing position in the game, i.e., the net utility from changing position from one toss, is zero. At the same time, the implicit costs can be enormous in terms of time. These costs in terms of time mean that changing position from one toss to any other position is not optimal. In summary, by logical principles, a rational choice among equal opportunities will optimize the cost of time invested in each opportunity.

\subsection{Utility Tie Braking with Time Factor}
As Table~\ref{tbl:paradoExpectedPayoffTable3} shows, there is indifference in the dimension of the benefit among the alternatives that Paul can choose from. In such a case, Paul can choose a random position or find a tie-breaking criterion that is measured in different dimensions, for instance, the dimension of time. Given two alternatives of optional St. Petersburg games, game $A$ and $B$, with equal expected utility, $u(A)=u(B)$, a rational player will choose the game that optimizes the factor of time. 

We assume that each toss has a time duration and we annotate the duration of a single toss with $\epsilon$. Given a St. Petersburg game $A$ that constitutes of $k$ tosses, with $k>0$, the duration of game $A$ is $k\epsilon$. In a St. Petersburg game $B$ that consists of $i$ tosses where $i<k$, the duration of game $B$ is $i\epsilon$ and $i\epsilon < k\epsilon$. Suppose that in terms of utility there is indifference between game $A$ and $B$, i.e., $u(A)=u(B)$. In terms of time, it is straight-forward that game $B$ is preferable.  Since in terms of the cost of time, game $B$ is preferable to game $B$, a rational player that both maximizes the utility and optimizes the time will choose game $B$ which is time preserving.

We now extend the tie breaking point to a set of game alternatives. Let a set $$STP=X_1,X_2,X_3....X_k$$ be a set of St. Petersburg games with respective expected utilities of $$u(X_1),u(X_2),u(X_3)...u(X_k)$$ and respective time duration $$t(X_1),t(X_2),t(X_3)....t(X_k).$$ Let the expected utility of each two games in the set be equal, i.e for each $X_i,X_j \in STP$ holds $u(X_i)=u(X_j)$. Suppose $$t(X_1)<t(X_2)<t(X_3)....<t(X_k)$$ a rational player will pursue the game $X_1$ which optimizes the time required to achieve the expected utility. In other words, among equally attractive games in terms of the expected utility, a rational player will pursue the time-preserving alternative which is the least number of tosses. 

\subsection{From Time to Resource-Based Utility Tie Breaking Point}
 The tie-breaking point that is based on the time factor can be easily extended to a general case of resource-consuming actions or games. We treat time as a resource when we define the tie-breaking point and pursue the resource preserving scenario among scenarios with equal net utility change. The logic of preserving time or any other resource is the same. 

\section{The Net Utility in (Bounded) Rational decision-making Process}
In this section, we define and discuss the necessary conditions for a value rational decision-making process. By the term {\em value rational}, we refer to a decision-making process with net positive utility polarity, that is, a total net positive utility change. We start with a review of the rational and bounded rational decision-making process. We then present two necessary conditions that the utility function has to conform to avoid the St. Petersburg Paradox. The first condition is the boundness of the utility function. The second condition concerns the net utility polarity, i.e., the net change of the utility value that is achieved as a result of the decision-making process outcome. The net utility polarity of a decision-making process must be positive relative to the starting position. The relation point is the break-even position, where the most basic break-even position in any investment is the decision not to "buy in", i.e., doing nothing. Combining these necessary conditions allows us to define the theorem of indifference (\ref{thm:suffix}) and to present the role of the relative net utility in a rational, sequential rational decision-making process. 
\subsection{Value Rational and Bounded Rational decision-making Process}
Problem solving, planning, and decision-making processes constitute ordered decision choices and decision actions~\cite{kriger1992organizational,marwala2015causality}. Instantaneous decision-choices are the atomic units that compose a decision's action. An optimized decision results in a global maximization of utility \cite{degroot2005optimal,berger2013statistical}. \textbf{\textit{Decision actions}} can be \textbf{\textit{rational}} or \textbf{\textit{irrational}}. Following the work of~\cite{marwala2015causality}, we define the decision action as rational if it results in a globally optimum outcome, based on logical principles and derived from complete and relevant information. A decision based on irrelevant or incomplete information is irrational, and cannot lead to a global optimum solution. A rational decision-making process comprises of rational decision actions, and the entire process is optimized in time and results in a global utility optimum~\cite{grune2012paradoxes,marwala2014artificial,marwala2015causality}. The quantification of the extent of this irrationality is what is called bounded rationality.

The concept of\textbf{\textit{ bounded rationality}}~\cite{simon1957models,simon1990mechanism,simon1991bounded} explains irrational decision-making based on incomplete data analysis which leads to incomplete information for making a decision. The concept of \textbf{\textit{Flexible-bounded rationality}} is based on the fact that advancements in AI and machine data analysis make the bounds of rationality flexible (\cite{marwala2013flexibly}). 

The rational decision-making process constitutes a sequential (inter-related) application of rational decision actions. \textbf{\textit{Rationality is not dividable}}~\cite{marwala2013flexibly}, a process of applying a sequence of depended decision actions cannot be partly rational and partly irrational. Even one irrational action makes the entire decision-making process irrational and compromises the attainment of a globally optimal solution.  The imperfection of information in a single decision action leads an entire decision-making process to be irrational with a sub-optimal end, in time or utility. Advanced information analysis methods can assist in repairing (partially) the imperfection of information to lead to a (partial) rationalization of a decision-making process. 

We now present two conditions that the utility function must conform to in order to allow for a value rational decision-making process.

\subsection{The Boundness of Utility Function}
\label{subsec:utilityBoundness}
~\cite{bassett1987st,samuelson1977st} provides a detailed historical review on the boundness of utility conditions. The first researcher to observe the necessity of bounded utility function to avoid the paradox was \cite{menger1934unsicherheitsmoment}. Menger's observations were later extended and proven by \cite{arrow1970essays}.
The paradox was extensively investigated in the work of \cite{menger1934unsicherheitsmoment}\footnote{the paper translated to English in \cite{menger1979role}}, which among other things devised the Menger theorem, which can be described as follows.

\begin{thm}
A path has a finite price with the use of utility function $f(x)$ for any sequence of possible payoffs $\tuple{\action_{1},\dots,\action_{n}.\dots}$ when and only when function $f(x)$ is limited.
\end{thm}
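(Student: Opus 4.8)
The plan is to prove the two directions of the biconditional separately, reading the ``price'' of a path $\tuple{\action_1,\dots,\action_n,\dots}$ as the expected transformed payoff $\sum_{k\ge 1} 2^{-k} f(\action_k)$ inherited from the St.\ Petersburg probability structure; the argument is identical for any fixed summable weights $p_k$ in place of $2^{-k}$, so I would state it at that level of generality. The claim to be established is that every such path has a finite price precisely when $f$ is bounded on its domain.

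First I would dispatch the easy implication, that boundedness forces a finite price. If $f$ is limited, say $|f(x)|\le M$ for all $x$ in the domain of payoffs, then for any path $\sum_{k\ge 1} 2^{-k}\, |f(\action_k)| \le M \sum_{k\ge 1} 2^{-k} = M < \infty$, so the series defining the price converges absolutely and the price is finite. This is a one-line comparison estimate and presents no real difficulty.

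The substantive direction is the converse: if $f$ is not limited, some path has an infinite price. I would argue by direct construction. Assume $f$ is unbounded; without loss of generality assume it is unbounded above, the unbounded-below case being symmetric and producing price $-\infty$. Unboundedness guarantees that for each $k$ there is a payoff $\action_k$ in the domain with $f(\action_k) \ge 2^k$. Assembling the path $\plan = \tuple{\action_1,\action_2,\dots}$ from these choices, its price is $\sum_{k\ge 1} 2^{-k} f(\action_k) \ge \sum_{k\ge 1} 2^{-k}\cdot 2^k = \sum_{k\ge 1} 1 = \infty$, which is exactly the St.\ Petersburg blow-up reproduced for the transformed utility. (One could equally demand $f(\action_k)\ge k\,2^k$ to make the divergence blatant; the only feature of the game used here is the doubling rule, i.e.\ that the decay $2^{-k}$ of the occurrence probabilities can be out-run.)

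I expect the main obstacle to be definitional rather than computational: making precise what a ``path'' and its ``price'' are (a countable payoff sequence equipped with the geometric stopping probabilities, as opposed to an arbitrary lottery), and being careful that ``limited'' is interpreted as boundedness on the payoff domain so that unboundedness genuinely yields payoffs with $f(\action_k)\ge 2^k$. Once those conventions are fixed, both halves reduce to the elementary comparison estimates above, and the ``only when'' half additionally shows that no relabelling of the payoffs can rescue an unbounded utility function.
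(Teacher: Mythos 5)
You should know at the outset that the paper contains no proof of this statement to compare against: it is quoted as Menger's theorem, with the argument credited to \cite{menger1934unsicherheitsmoment} (translated in \cite{menger1979role}) and, in its two-sided form, to \cite{arrow1970essays} (the Utility Boundness Theorem, Theorem~\ref{thm:expectedUtilBound}). Judged on its own, your proposal is correct and is in fact the classical argument from that literature. The ``if'' direction is the absolute-convergence comparison $\sum_{k\ge 1} 2^{-k}\,|f(a_k)| \le M \sum_{k\ge 1} 2^{-k} = M$, and the ``only if'' direction is exactly Menger's super-St.\ Petersburg construction: if $f$ is unbounded on the prize domain, pick prizes $a_k$ with $f(a_k)\ge 2^k$ (or $1/p_k$ for general summable stopping probabilities $p_k$), so that the expected transformed payoff $\sum_{k\ge 1} 2^{-k} f(a_k)$ diverges --- the prizes out-run the decay of the probabilities, which is precisely the observation the paper attributes to Menger when it says concave transformations alone cannot resolve the paradox. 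Two small points deserve emphasis. First, the caveat you flag about the domain is the real content of the ``only when'' half: one needs $f$ unbounded on the set of admissible payoffs, not merely unbounded somewhere, and the statement's quantifier should be read as ``every payoff sequence has finite price iff $f$ is bounded.'' Second, the unbounded-below case you dismiss as symmetric is, for an increasing utility of wealth on the prize domain $[1,\infty)$, vacuous (such an $f$ is bounded below by $f(1)$); it becomes substantive only in Arrow's setting of general lotteries, which is why Theorem~\ref{thm:expectedUtilBound} asserts boundedness from above and below via a separate construction. With those conventions fixed, your two comparison estimates do establish the stated equivalence.
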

\cite{arrow1970essays} proves the utility boundness theorem.
\begin{thm}[Utility Boundness Theorem]
\label{thm:expectedUtilBound}
Any utility function which satisfies the conditions of the Expected Utility Theorem must be bounded from above and below.
\end{thm}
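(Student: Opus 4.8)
The plan is to turn the St.~Petersburg construction itself into the engine of the proof: if $\utilfunc$ were unbounded we could exhibit a lottery whose expected utility is forced to be infinite, contradicting the fact that an Expected Utility representation assigns a real number to every lottery in its domain. I would prove boundedness from above in detail; boundedness from below follows by applying the identical argument to $-\utilfunc$ (equivalently, to a decision maker facing a sequence of ever-worse outcomes).

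Suppose, for contradiction, that $\utilfunc$ is not bounded above. Fix a reference outcome $\action_0$ and, using unboundedness, choose outcomes $\action_1,\action_2,\dots$ with $\utilfunc(\action_n)>2^{n}$ for every $n\ge1$. Let $L$ be the lottery paying $\action_n$ with probability $2^{-n}$ for each $n\ge1$ --- exactly the prize structure of the St.~Petersburg game. For each $N$ let $L_N$ be the finite truncation paying $\action_n$ with probability $2^{-n}$ for $1\le n\le N$ and paying $\action_0$ with the residual probability $2^{-N}$. Since $L_N$ has finite support, the Expected Utility Theorem applies directly and $\utilfunc(L_N)=\sum_{n=1}^{N}2^{-n}\utilfunc(\action_n)+2^{-N}\utilfunc(\action_0)$.

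The next step is to show $\utilfunc(L_N)\to\infty$. Passing from $L_N$ to $L_{N+1}$ moves the mass $2^{-N}$ off $\action_0$ onto $2^{-(N+1)}$ worth of $\action_{N+1}$ and $2^{-(N+1)}$ worth of $\action_0$; a direct computation gives $\utilfunc(L_{N+1})-\utilfunc(L_N)=2^{-(N+1)}\bigl(\utilfunc(\action_{N+1})-\utilfunc(\action_0)\bigr)>1-2^{-(N+1)}\utilfunc(\action_0)$, which exceeds $\tfrac12$ once $N$ is large, so the increasing sequence $\utilfunc(L_N)$ diverges. Moreover $L$ dominates each $L_N$ --- it only replaces $\action_0$-mass by the strictly preferred prizes $\action_n$, $n>N$ --- so the monotonicity/dominance requirement that is part of the Expected Utility axioms on the space of lotteries yields $\utilfunc(L)\ge\utilfunc(L_N)$ for every $N$, whence $\utilfunc(L)=\infty$ (and indeed, if the representation is already known on countable lotteries, simply $\utilfunc(L)=\sum_n 2^{-n}\utilfunc(\action_n)\ge\sum_n 1=\infty$). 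This contradicts the fact that $\utilfunc$ represents the preference by real numbers, so $\utilfunc$ must be bounded above; the mirror construction gives boundedness from below.

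The crux --- and genuinely the whole content of the theorem --- is the step $\utilfunc(L)\ge\sup_N\utilfunc(L_N)$: one must know that the Expected Utility Theorem's hypotheses extend far enough to include the countably-supported lottery $L$ in the domain of the preference and to force it above its truncations, via a dominance or monotone-continuity axiom. Absent such a hypothesis the theorem is simply false, which is exactly why the ``conditions of the Expected Utility Theorem'' must be read as including this; once they are, the doubling that produces the St.~Petersburg Paradox immediately produces the contradiction. The remaining bookkeeping --- choosing or relabelling $\action_0$ so that $\action_{N+1}\succ\action_0$ for all relevant $N$, and running the symmetric construction with $\utilfunc(\action_n)<-2^{n}$ for the lower bound --- is routine.
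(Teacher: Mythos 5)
The paper never actually proves this theorem---it is stated as Arrow's result and supported only by the citation to \cite{arrow1970essays}---so the only benchmark is the classical Menger--Arrow argument, and that is exactly what you have reconstructed: unboundedness yields prizes $\action_n$ with $\utilfunc(\action_n)>2^{n}$, the St.~Petersburg lottery over them has truncations whose expected utilities diverge (your estimate $\utilfunc(L_{N+1})-\utilfunc(L_N)=2^{-(N+1)}\bigl(\utilfunc(\action_{N+1})-\utilfunc(\action_0)\bigr)$ is correct), and dominance pins the full lottery above every truncation, contradicting real-valuedness, with the mirror construction handling the lower bound. Your proof is correct, and you rightly flag the one load-bearing point: the vNM axioms on finite-support lotteries alone do \emph{not} force boundedness, so ``the conditions of the Expected Utility Theorem'' must be read as delivering the representation on countably-supported lotteries together with a dominance or monotone-continuity axiom (this is precisely how Arrow's proof proceeds), and under that reading your argument is sound.
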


The Expected Utility Theorem \cite{arrow1970essays,samuelson1977st,bassett1987st} provides a condition for preference among two actions. 
\begin{thm}[Expected Utility Theorem]
\label{thm:expectedUtil}
It is possible to define a real-valued utility function over actions [or decisions with their implied respective possible outcomes, each with specifiable probabilities], with the following properties: 
\begin{enumerate}
\item $\action_{1} > -\action_{2}$ [i.e., action 1 definitely preferred to action 2] if and only if $\utilfunc(\action_1) > \utilfunc(\action_2)$
\item $\utilfunc(\action) = E[\utilfunc(\action\applied{\state})]$ [i.e., equals the expected
 value of utilities of {\bf outcomes} implied by action a].
\end{enumerate}
\end{thm}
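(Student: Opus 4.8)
The plan is to derive the representation from the von Neumann--Morgenstern axioms on the preference relation over actions, identifying each action $\action$ with the probability distribution it induces over its possible outcomes $\action\applied{\state}$. The theorem as stated suppresses its hypotheses, so the first move is to make them explicit: I would assume the preference relation is (i) complete and transitive, (ii) Archimedean/continuous --- whenever $\action_1 \succ \action_2 \succ \action_3$ there are weights $\alpha,\beta \in (0,1)$ with $\alpha\action_1 + (1-\alpha)\action_3 \succ \action_2 \succ \beta\action_1 + (1-\beta)\action_3$ --- and (iii) independent --- $\action_1 \succeq \action_2$ iff $\gamma\action_1 + (1-\gamma)\action_3 \succeq \gamma\action_2 + (1-\gamma)\action_3$ for every $\gamma \in (0,1]$ and every $\action_3$.

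First I would fix a most-preferred action $\action^{*}$ and a least-preferred action $\action_{*}$ over the (finite, or suitably restricted) set of outcome-actions; if $\action^{*} \sim \action_{*}$ every action is indifferent and a constant $\utilfunc$ does the job, so assume $\action^{*} \succ \action_{*}$. For an arbitrary $\action$ with $\action^{*} \succeq \action \succeq \action_{*}$, continuity yields the existence and independence the uniqueness of a weight $\lambda_{\action} \in [0,1]$ with $\action \sim \lambda_{\action}\action^{*} + (1-\lambda_{\action})\action_{*}$; set $\utilfunc(\action) := \lambda_{\action}$, normalising $\utilfunc(\action^{*})=1$ and $\utilfunc(\action_{*})=0$.

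The remaining work is to check the two asserted properties. For property 1, I would prove the monotonicity lemma that $\mu\action^{*}+(1-\mu)\action_{*} \succ \nu\action^{*}+(1-\nu)\action_{*}$ exactly when $\mu>\nu$, and combine it with independence to get $\action_1 \succ \action_2 \iff \lambda_{\action_1} > \lambda_{\action_2} \iff \utilfunc(\action_1) > \utilfunc(\action_2)$. For property 2, I would first establish linearity in mixtures, $\utilfunc(\gamma\action_1 + (1-\gamma)\action_2) = \gamma\,\utilfunc(\action_1) + (1-\gamma)\,\utilfunc(\action_2)$, by replacing each $\action_i$ by its indifferent $\action^{*}/\action_{*}$ mixture and collapsing the resulting compound lottery via independence; then, writing $\action$ as the mixture $\sum_{\state} P(\state)\,(\action\applied{\state})$ of its degenerate outcome-actions and iterating linearity, I obtain $\utilfunc(\action) = \sum_{\state} P(\state)\,\utilfunc(\action\applied{\state}) = E[\utilfunc(\action\applied{\state})]$, which is exactly claim 2.

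I expect the main obstacle to be twofold. First, because the hypotheses are omitted, the real content of the proof is isolating the minimal axiom set under which the conclusion holds, and noting that --- combined with Theorem~\ref{thm:expectedUtilBound} --- it forces $\utilfunc$ to be bounded above and below. Second, the existence of a best and a worst action is not automatic once the outcome set is infinite --- as in the St. Petersburg game itself --- so I would either restrict attention to finitely-supported actions, replace the $\action^{*}/\action_{*}$ anchors by a direct order-density argument, or simply invoke boundedness of the payoff utility (supplied by the companion theorems) to recover the two anchors. Once the axioms and the anchors are in place, the monotonicity lemma, the uniqueness of $\lambda_{\action}$, and the collapsing of compound lotteries are all routine.
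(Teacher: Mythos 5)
You should know at the outset that the paper contains no proof of this statement: Theorem~\ref{thm:expectedUtil} is quoted verbatim from the literature (Arrow's Expected Utility Theorem, with citations to Arrow, Samuelson, and Bassett), so the only meaningful comparison is with the classical von Neumann--Morgenstern argument in those sources. Measured against that, your sketch is essentially the standard proof and is correct in outline: making the suppressed axioms explicit (completeness, transitivity, Archimedean continuity, independence), calibrating each action against a best/worst pair via a unique mixture weight $\lambda_{\action}$, proving the monotonicity lemma to get property~1, and using mixture linearity plus reduction of compound lotteries to get $\utilfunc(\action)=E[\utilfunc(\action\applied{\state})]$ for property~2. The one place where you should be more careful is the passage from finitely supported mixtures to the countably supported lotteries that actually arise in the St.\ Petersburg game: finite independence and Archimedean continuity only deliver the expectation formula for finite-support actions, and extending it to infinite-support actions requires an additional dominance or monotone-continuity axiom --- which is precisely the point at which Theorem~\ref{thm:expectedUtilBound} (boundedness) becomes a necessary condition rather than a convenience. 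Relatedly, invoking the companion boundedness theorem to manufacture the anchors $\action^{*},\action_{*}$ risks circularity, since that theorem presupposes the representation you are constructing; the cleaner fix is the one you also mention, namely restricting the construction to finite-support lotteries first and treating the extension (and its boundedness requirement) as a separate step, or replacing attained best/worst anchors with a supremum/infimum calibration. With that caveat your route matches the cited classical proof; it is not a different approach, just the standard one the paper imports without reproving.
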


Assuming state $\state_{1}$ to be the original state, the expected utility of action $action$ is the expected utility value of the target state $\state_{2}$, i.e., the utility of the state that is achieved after the application of an action $\action$. In simple terms, $\action_{1}$ is preferable to $\action_{2}$ if the expected value of utilities of outcomes of the application of $\action_{1}$ in state $\state$, $E[\utilfunc(\action_{1}\applied{\state})]$, are greater than the application of $\action_{2}$ in state $\state$, $E[\utilfunc(\action_{2}\applied{\state})]$.

Note that the expected utility theorem takes into account of the utility of the target state, $\utilfunc(\state_2)$, but not the utility of the original state $\utilfunc(\state_1)$. The theorem captures a process of strictly positive net utility action (or decision) application. That is, each action applied necessarily improves the state in which it is applied. However, most real-world decision-making processes, and in particular, sequential decision-making processes, involve the application of actions or sequences of actions of negative utility polarity. Such actions may worsen/decrease the utility of the state in which it is applied \footnote{the discussion is about negative utility, not the costs of actions that can be seen as investments}. In such a case, the theorem provides a recommendation of preferring actions over other actions, although they both might be negative in total net terms. 

Let us consider the following example to stress this point. Consider an original state $\state_{1}$ with two actions $\action_1$ and $\action_2$ applicable in state $\state_{1}$;
\begin{enumerate}
    \item $\utilfunc(\action_1)$ = $E[\utilfunc(\action_{1}\applied{\state_1})]$ = $E[\utilfunc(\state_{2})]$ = 100 
    \item $\utilfunc(\action_2)$ = $E[\utilfunc(\action_{2}\applied{\state_1})]$ = $E[\utilfunc(\state_{3})]$ = 150 
\end{enumerate}
The expected value of utility outcomes of applying action $\action_2$ are greater than those of applying $\action_1$. Using the theorem of expected utility, we will choose to apply action $\action_2$ to achieve a greater utility outcome. It should be noted, to that end, that a piece of available information which is essential for the decision-making process is neglected. The utility of the original state is not incorporated in the preference system that is implied by the theorem of expected utility.

Now let us consider the utility of the original state $\state_1$ and incorporate it into our preferences system.
We are interested in the net utility polarity in our example and consider several different utility values for the original state $\state_1$ as follows:
\begin{enumerate}
    \item $\utilfunc(\state_1)$ = 75
    \begin{enumerate}
        \item $E[\utilfunc(\action_{1}\applied{\state_1})] = 100 > \utilfunc(\state_1)$
        \item $E[\utilfunc(\action_{2}\applied{\state_1})] = 150  > \utilfunc(\state_1)$ 
    \end{enumerate}
    \item $\utilfunc(\state_1)$ = 100
        \begin{enumerate}
        \item $E[\utilfunc(\action_{1}\applied{\state_1})] = 100 = \utilfunc(\state_1)$
        \item $E[\utilfunc(\action_{2}\applied{\state_1})] = 150 > \utilfunc(\state_1)$ 
    \end{enumerate}
    \item $\utilfunc(\state_1)$ = 125
        \begin{enumerate}
        \item $E[\utilfunc(\action_{1}\applied{\state_1})] = 100 < \utilfunc(\state_1)$ 
        \item $E[\utilfunc(\action_{2}\applied{\state_1})] = 150 > \utilfunc(\state_1)$
    \end{enumerate}
    \item $\utilfunc(\state_1)$ = 150
        \begin{enumerate}
        \item $E[\utilfunc(\action_{1}\applied{\state_1})] = 100 < \utilfunc(\state_1)$
        \item $E[\utilfunc(\action_{2}\applied{\state_1})]= 150 = \utilfunc(\state_1)$
    \end{enumerate}
    \item $\utilfunc(\state_1)$ = 175
        \begin{enumerate}
        \item $E[\utilfunc(\action_{1}\applied{\state_1})] = 100 < \utilfunc(\state_1)$
        \item $E[\utilfunc(\action_{2}\applied{\state_1})] = 150 < \utilfunc(\state_1)$
    \end{enumerate}
\end{enumerate}

When the focus is on the outcome of actions, we can compare action $\action_1$ to $\action_2$ in terms of their outcomes, but we miss the impact of the action, which is easily captured using the net utility terms. It is straightforward that $\action_2$ is preferable to $\action_1$. However, are we asking the right question? Our simple example shows that although the outcome of $\action_2$ is greater than the outcome of $\action_1$, in relative net change terms, it does not imply that we should apply any of them since keeping the current position and doing nothing, results in a better outcome than doing something. The theorem of indifference (\ref{thm:suffix}) captures exactly that phenomena in sequential scenarios as a decision-making process, sequential action application, and sequence of events. The theorem exploits the polarity of the net change in sequential scenarios as the bound for the utility function to avoid the St. Petersburg Paradox. In the next section, we take a closer look at the net utility polarity role in the rational decision-making process and reformulate the theorem of expected utility to the theorem on expected net utility.

\subsection{Net Utility Polarity in Value Rational decision-making Process}
\label{subsec:netUtilityPolarity}
Value rational decision-making process (or game/investment) has net positive utility polarity, i.e., it leads to total net positive change. In particular, the last decision (or position/event) of any value rational decision-making process has to be of net positive polarity.

Based on Theorem~\ref{thm:expectedUtil}, we define the expected net utility theorem as follows: The expected net utility theorem is defined in terms of {\bf total change of utility} rather than the {\bf outcomes} solely as in Theorem~\ref{thm:expectedUtil}. In practice, we define the utility of action as the difference between the action's precondition state and the outcomes of the action, rather than the action's outcomes solely. Evaluation of utility function in terms of expected net utility theorem provides a more informative picture of the real impact of the utility function of actions/decisions. The real impact of an action is not fully captured in the outcomes of actions and can vary concerning the utility of the state in which the action is applied.
\begin{thm}[The Expected Net Utility Theorem]
Given a state $\state_{pre}$ and an action $\action$ applicable in state $\state_{pre}$, $\state_{eff}$ is the state that is obtained with an application of action $\action$. It is possible to define a real-valued utility function over actions [or decisions with their implied respective possible outcomes, each with specifiable probabilities], with the following properties: 
\begin{enumerate}
\item $\action_{1} > -\action_{2}$ [i.e., action 1 definitely preferred to action 2] if and only if $\utilfunc(\action_1) > \utilfunc(\action_2)$
\item $\utilfunc(\action) = E[\utilfunc(\action\applied{\state}) - \utilfunc(s)]$ [i.e., equals the expected
 value of {\bf the total net change} of utilities implied by action $\action$].
\item for an action $\action$, the {\bf net utility} of $\action$ is $\valuefunc(\action) = \sum_{\var\in\variables{\eff(\action)}}[{\valuefunc(\eff(\action)[\var])}-{\valuefunc(\pre(\action)[\var])}].$
\end{enumerate}
\end{thm}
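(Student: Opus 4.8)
The plan is to obtain the statement as a corollary of the Expected Utility Theorem (Theorem~\ref{thm:expectedUtil}): that theorem already furnishes a real-valued utility $\utilfunc$ over states/outcomes satisfying the preference-representation property together with $\utilfunc(\action)=E[\utilfunc(\action\applied{\state})]$, so the only genuinely new content here is (a) re-anchoring this utility at the precondition state instead of at $0$, and (b) unfolding the factored (\sas{}) structure of states to obtain the per-variable formula of claim~3.

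First I would fix the precondition state $\state=\state_{pre}$ and, for every action $\action$ applicable in $\state$, \emph{define} $\utilfunc(\action):=E[\utilfunc(\action\applied{\state})-\utilfunc(\state)]$. Since $\state$ is deterministic, $\utilfunc(\state)$ is constant with respect to the expectation, so $\utilfunc(\action)=E[\utilfunc(\action\applied{\state})]-\utilfunc(\state)$; this is precisely claim~2. For claim~1, note that any two actions $\action_1,\action_2$ to be compared are applicable in the same state $\state$, so the additive constant $-\utilfunc(\state)$ cancels in the comparison: $\utilfunc(\action_1)>\utilfunc(\action_2)$ holds if and only if $E[\utilfunc(\action_1\applied{\state})]>E[\utilfunc(\action_2\applied{\state})]$, which by Theorem~\ref{thm:expectedUtil} holds if and only if $\action_1$ is preferred to $\action_2$. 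Thus the preference-representation property is inherited verbatim from the Expected Utility Theorem, and no new constraint on $\utilfunc$ is imposed.

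For claim~3 I would invoke the additivity of $\utilfunc$ over the state variables --- the hypothesis already used for the game $\ptask$ in Theorem~\ref{thm:suffix} --- writing $\utilfunc(\state)=\sum_{\var}\utilfunc(\state[\var])$. Applying this to the effect valuation $\eff(\action)$ and the precondition valuation $\pre(\action)$ and subtracting, the summands for variables that $\action$ leaves untouched are identical on both sides and cancel, leaving $\utilfunc(\action)=\sum_{\var\in\variables{\eff(\action)}}[\utilfunc(\eff(\action)[\var])-\utilfunc(\pre(\action)[\var])]$; taking the expectation over the (possibly stochastic) effects reconciles this with claim~2.

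The main obstacle is claim~3, not claims~1--2: it depends on $\utilfunc$ being additively separable over the variables of the factored state space, and on every effect variable $\var\in\variables{\eff(\action)}$ possessing a well-defined reference value $\pre(\action)[\var]$ so that the difference $\utilfunc(\eff(\action)[\var])-\utilfunc(\pre(\action)[\var])$ is meaningful. Where an action writes a variable on which it has no precondition, one must supply a convention --- taking the reference value to be the value of $\var$ in $\state_{pre}$ --- or else restrict attention to actions whose effect variables form a subset of their precondition variables; discharging this bookkeeping cleanly, and checking that it remains consistent with the expectation in claim~2 when effects are stochastic, is the only part that is not routine.
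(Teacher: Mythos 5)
Your reconstruction is sound, but there is nothing in the paper to compare it against step by step: the Expected Net Utility Theorem is stated and then immediately discussed ("is more informative since\dots"), with no proof offered at all. In the paper it functions as a definitional reformulation of Arrow's Expected Utility Theorem (Theorem~\ref{thm:expectedUtil}) in relative terms, with claim~3 imported verbatim from the factored-planning formalism of \cite{muller2018value}. Your route --- fix $\state_{pre}$, define $\utilfunc(\action):=E[\utilfunc(\action\applied{\state})-\utilfunc(\state)]$, observe that the constant $\utilfunc(\state)$ cancels so the preference representation of Theorem~\ref{thm:expectedUtil} is inherited unchanged, then unfold additivity over variables to get the per-variable difference formula --- is exactly the argument the paper leaves implicit, and it is correct as far as it goes. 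You also correctly identify the only real friction points, which the paper glosses over: claim~3 presupposes additive separability of $\utilfunc$ over state variables (the same hypothesis as in Theorem~\ref{thm:suffix}) and a well-defined reference value $\pre(\action)[\var]$ for every $\var\in\variables{\eff(\action)}$ (in the \sas{} setting one either requires $\variables{\eff(\action)}\subseteq\variables{\pre(\action)}$ or falls back on the value of $\var$ in $\state_{pre}$); and claim~3 as written carries no expectation, so it coincides with claim~2 only for deterministic effects, the stochastic case being recovered by taking expectations of the per-variable sum. Flagging and discharging these points makes your write-up strictly more careful than the paper's own treatment; the one caution is that claim~1 holds only for actions compared within the same precondition state, since re-anchoring by $-\utilfunc(\state)$ is not order-preserving across different states --- a restriction the theorem's hypothesis ("applicable in state $\state_{pre}$") supports and your proof should state explicitly.
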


% The proof of the theorem is similar to the proof of Theorem~\ref{thm:suffix}.

The Expected Net Utility Theorem is more informative since it exploits the utility information compactly and allows us effectively (and efficiently) to put this information into practice, i.e., to conduct a more informative preference system. A preference system that based on the Expected Net Utility Theorem takes into account the net utility polarity of actions, allows us to compare the outcomes actions as well as the total net change achieved by the actions. The following example shows the informativeness of a preference system based on net utility. In some cases, actions with positive outcomes might bring negative utility in net terms.

\begin{enumerate}
    \item $\utilfunc(\state_1)$ = 75
    \begin{enumerate}
        \item $ E[\utilfunc(\action_{1}\applied{\state_1}) - \utilfunc(\state_1)] = 25$
        \item $E[\utilfunc(\action_{2}\applied{\state_1})- \utilfunc(\state_1)] = 75$
    \end{enumerate}
    \item $\utilfunc(\state_1)$ = 100
        \begin{enumerate}
        \item $E[\utilfunc(\action_{1}\applied{\state_1})- \utilfunc(\state_1)] = 0$
        \item $E[\utilfunc(\action_{2}\applied{\state_1})- \utilfunc(\state_1)] = 50$
    \end{enumerate}
    \item $\utilfunc(\state_1)$ = 125
        \begin{enumerate}
        \item $E[\utilfunc(\action_{1}\applied{\state_1})- \utilfunc(\state_1)] = -25$
        \item $E[\utilfunc(\action_{2}\applied{\state_1})- \utilfunc(\state_1)] = 25$
    \end{enumerate}
    \item $\utilfunc(\state_1)$ = 150
        \begin{enumerate}
        \item $E[\utilfunc(\action_{1}\applied{\state_1})- \utilfunc(\state_1)] = -50$
        \item $E[\utilfunc(\action_{2}\applied{\state_1})- \utilfunc(\state_1)] = 0$
    \end{enumerate}
    \item $\utilfunc(\state_1)$ = 175
        \begin{enumerate}
        \item $E[\utilfunc(\action_{1}\applied{\state_1})- \utilfunc(\state_1)] = -75$
        \item $E[\utilfunc(\action_{2}\applied{\state_1})- \utilfunc(\state_1)] = -25$
    \end{enumerate}
\end{enumerate}

\subsection{The Theorem of Indifference in Value Rational decision-making Process}
%/evaluations
Combining the boundness (Section \ref{subsec:netUtilityPolarity}) and the net utility polarity (Section \ref{subsec:utilityBoundness}) conditions for utility function, we formulated the theorem of indifference (Theorem \ref{thm:suffix}) that recognizes the suffix of a sequence of events, decisions, or changes of position that do not bring positive change (that is, they are not of net positive polarity). The theorem of indifference combines the two conditions over the utility function in a sequential decision-making process in a sequential game/investment. It puts it into practice in the algorithm of incremental utility improvement evaluation provided in Figure~\ref{fig:inc-net-util}.

\section{The Universal Concept of Net Utility}
In 1901 \cite{MaxPlanck} published a paper that explained the black body radiation problem by assuming that energy is in the form of little packets called quanta. In 1905 \cite{AlbertEinsten} published a paper that explained the photoelectric effect using the concept of quanta that was proposed by Max Planck. Because the theory of quanta was able to explain two different phenomena, it was assumed to be a universal theory of nature. The concept of quanta that proposed the packaging of energy gave rise to the theory of quantum mechanics. The manner in which we solved the St. Petersburg Paradox is through the use of the concept of the net utility. The same concept of the net utility was used in behavioral economics by \cite{kahneman1979prospect}. In prospect theory, the importance of relative utility is observed additionally to the absolute utility. The concept of net utility is useful in different disciplines, prescriptive economics (e.g., the St. Petersburg Paradox) and descriptive economics (e.g., behavioral economics). Hence, the net utility should be used as a universal concept of understanding economics in the same way as the theory of energy quanta was assumed to be a universal theory when it explained both the black body radiation problem and the photoelectric effect. In classical economics, we define a rational agent as an agent that maximizes its utility. In light of the universal concept of utility, this is the wrong approach to defining a rational agent. We define a rational agent as an agent that maximizes its net utility. With this definition on maximizing the net utility one is able to explain the behavior of people as it is done in the prospect theory. Secondly, one is able to explain the St. Petersburg Paradox.  

\section{Conclusion}
This paper proposed the use of the concept of the net utility to successfully explain the St. Petersburg problem. The concept of net utility or relative utility was experimentally observed by Kahneman and Tvesrky and has had a profound impact in the field of behavioral economics. This paper furthermore concludes that the net utility concept more universally explains value than the nominal utility.

\bibliographystyle{ecca}
\bibliography{muller}
\end{document}